\newtheorem{Corollary}{Corollary}
\newtheorem{Proposition}{Proposition}
  \let\oldparagraph\paragraph
  \renewcommand{\paragraph}{
    \@ifstar
      \xxxParagraphStar
      \xxxParagraphNoStar
  }
  \newcommand{\xxxParagraphStar}[1]{\oldparagraph*{#1}\mbox{}}
  \newcommand{\xxxParagraphNoStar}[1]{\oldparagraph{#1}\mbox{}}
  \let\oldsubparagraph\subparagraph
  \renewcommand{\subparagraph}{
    \@ifstar
      \xxxSubParagraphStar
      \xxxSubParagraphNoStar
  }
  \newcommand{\xxxSubParagraphStar}[1]{\oldsubparagraph*{#1}\mbox{}}
  \newcommand{\xxxSubParagraphNoStar}[1]{\oldsubparagraph{#1}\mbox{}}
\patchcmd\longtable{\par}{\if@noskipsec\mbox{}\fi\par}{}{}
\def\maxwidth{\ifdim\Gin@nat@width>\linewidth\linewidth\else\Gin@nat@width\fi}
\def\maxheight{\ifdim\Gin@nat@height>\textheight\textheight\else\Gin@nat@height\fi}
\def\fps@figure{htbp}
  \renewcommand*\contentsname{Table of contents}
  \newcommand\contentsname{Table of contents}
  \renewcommand*\listfigurename{List of Figures}
  \newcommand\listfigurename{List of Figures}
  \renewcommand*\listtablename{List of Tables}
  \newcommand\listtablename{List of Tables}
  \renewcommand*\figurename{Figure}
  \newcommand\figurename{Figure}
  \renewcommand*\tablename{Table}
  \newcommand\tablename{Table}
\newcommand{\anon}{1}
\begin{document}

\def\spacingset#1{\renewcommand{\baselinestretch}%
{#1}\small\normalsize} \spacingset{1}


\if1\anon
{
  \title{\bf One-shot variable-ratio matching with fine balance}
  \author{Qian Meng \\
    Department of Statistics, University of Washington \\
    Zhe Chen \\
    Department of Biostatistics, Epidemiology and Informatics,\\
    University of Pennsylvania\\
    and \\
    Bo Zhang\thanks{Assistant Professor of Biostatistics, Vaccine and Infectious Disease Division, Fred Hutchinson Cancer Center. Email: {\tt bzhang3@fredhutch.org}. }\hspace{.2cm}\\
    Vaccine and Infectious Disease Division, Fred Hutchinson Cancer Center}
    \date{}
  \maketitle
} \fi

\if0\anon
{
  \bigskip
  \bigskip
  \bigskip
  \begin{center}
    {\LARGE\bf Title}
\end{center}
  \medskip
} \fi

\bigskip
\begin{abstract}
Variable-ratio matching is a flexible alternative to conventional $1$-to-$k$ matching for designing observational studies that emulate a target randomized controlled trial (RCT). To achieve fine balance---that is, matching treated and control groups to have the same marginal distribution on selected covariates---conventional approaches typically partition the data into strata based on estimated entire numbers and then perform a series of $1$-to-$k$ matches within each stratum, with $k$ determined by the stratum-specific entire number. This ``divide-and-conquer” strategy has notable limitations: (1) fine balance typically does not hold in the final pooled sample, and (2) more controls may be discarded than necessary. To address these limitations, we propose a one-shot variable-ratio matching algorithm. Our method produces designs with exact fine balance on selected covariates in the matched sample, mimicking a hypothetical RCT where units are first grouped into sets of different sizes and one unit within each set is assigned to treatment while others to control. Moreover, our method achieves comparable or superior balance across many covariates and retains more controls in the final matched design, compared to the ``divide-and-conquer" approach. We demonstrate the advantages of the proposed design over the conventional approach via simulations and using a dataset studying the effect of right heart catheterization on mortality among critically ill patients. The algorithm is implemented in the \textsf{R} package \textsf{match2C}.
\end{abstract}

\noindent%
{\it Keywords:} Matched observational studies; Network flow algorithm; Optimal matching
\vfill

\newpage
\spacingset{1.8} 

\section{Introduction}
\label{sec: introduction}

\subsection{Matching methods in observational studies}
\label{subsec: intro review of matching methods}
A central challenge in estimating causal effects using observational data is the presence of confounders---observed and unmeasured variables that simultaneously influence both the treatment and the outcome. Reducing bias from these confounders is critical. Within the potential outcomes framework for causal inference \citep{neyman1923application,rubin1974estimating}, methods to reduce bias can be broadly divided into two categories depending on the focus: those focusing on the design stage \citep{rosenbaum2002observational,rosenbaum2010design} and those on the analysis stage (see, e.g., \citet{hernan2020causal}). Design-based approaches to confounding adjustment do not rely on outcome data and are less dependent on outcome modeling \citep{Rubin2008,stuart2023}. Among these approaches, matching---which emulates an idealized hypothetical randomized controlled trial (RCT) by constructing comparison groups balanced on observed covariates---is among the most widely used methods. Matched samples can be analyzed using randomization or biased randomization-based methods \citep{rosenbaum2002observational} or regression-based methods \citep{rubin1973use}.

Early implementations of statistical matching relied on nearest-neighbor-based greedy algorithms which choose one or more matched control units for each treated unit sequentially without revision and reconsideration of previous choices. In a seminal work, \citet{rosenbaum1989optimal} first showed that this heuristic technique can be arbitrarily worse than the optimal solution, and in the same article, \citet{rosenbaum1989optimal} first recast the problem of optimal $1$-to-$k$ matching as solving a minimum-cost flow problem on a bipartite network, a combinatorial optimization problem well studied in the operations research literature \citep{schrijver2003combinatorial}. The solution that minimizes the cost of a network designed specifically to correspond to the statistical matching problem is shown to also minimize the total within-pair distance. \citet{rosenbaum1989optimal} thus established a principled optimization backbone for much of the subsequent research. By adjusting the structure of the network --- including how edges are connected and the capacity and cost associated with the edges --- the same framework can also accommodate many additional design features; see, e.g., \citet{lu2011optimal, pimentel2018optimal,yu2020matching,zhang2023statistical}, among others.

Pair match, or $1$-to-$k$ match in general, can be rigid and not the most efficient, as the algorithm often discards many control units. Two strategies were developed to remedy this. First strategy is referred to as \emph{full matching} \citep{rosenbaum1991characterization,hansen2004full,hansen2006optimal}. A full match partitions all treated and control units into disjoint subclasses, each containing either one treated unit and multiple control units or one control unit and multiple treated units. \cite{rosenbaum1991characterization} showed that (1) any distance-minimizing subclassification can be refined into a full match; and (2) solving for an optimal full match can be formulated as a minimum-cost network flow problem with carefully designed capacity constraints. \citet{hansen2006optimal} further extended this framework to accommodate many useful design features, such as forcing minimum and maximum treatment-to-control ratio within each matched set, incorporating calipers like a propensity score caliper, and excluding certain units from the final matched groups. 

A second strategy is referred to as \emph{variable-ratio matching}. A variable-ratio match consists of a flexible hybrid of different types of $1$-to-$k$ matches \citep{pimentel2015variable}, and it is particularly useful when the number of control units only moderately exceeds some multiple of the number of treated units. For instance, with $n_t = 1000$ treated units and $n_c = 1800$ candidate control units, a pair match would construct $I = 1000$ matched pairs in the final matched design and discard $1800 - 1000 = 800$ control units, while a $1$-to-$2$ match is not feasible. A variable-ratio match comes very handy in such scenarios. For instance, a variable-ratio match may end up constructing $600$ matched pairs, $300$ $1$-to-$2$ matched sets, and $100$ $1$-to-$3$ matched sets, discarding only $1800 - 600 - 2 \times 300 - 3 \times 100 = 300$ control units. In this way, control units are utilized to a fuller extent, improving the statistical efficiency \citep{ming2000substantial}.
 
\subsection{Variable-ratio match; fine balance}
\label{subsec: intro vr match}

\cite{ming2001note} first proposed an efficient assignment algorithm to conduct a variable-ratio match. \citet{pimentel2015variable} proposed an approach to conducting a variable-ratio match that also forces a useful design feature called \emph{fine balance}. Fine balance ensures that the treated group and the matched comparison group have identical marginal distributions on target variables, without imposing constraints on individual matched pairs \citep{rosenbaum2007minimum}. Fine balance is particularly useful for variables with many categories, because the stochastic balancing property of the propensity score often works poorly for these variables.

In the first step of \citeauthor{pimentel2015variable}'s  \citeyearpar{pimentel2015variable} algorithm, an ``entire number", defined as the inverse odds of the propensity score, was estimated from data, and units (treated and control) are then stratified based on their estimated entire numbers. In the second step,  for units in the stratum with an entire number in $(0,2)$, an optimal pair match with fine balance was constructed; in the stratum with an entire number in $[2,3)$, an optimal $1$-to-$2$ match with fine balance was constructed. This process continues until an optimal $1$-to-$k$ match with fine balance is constructed within each entire number-defined stratum. 





\subsection{Application to a study of right heart catheterization; limitations of the current method}
\label{subsec: intro case study and limitation}
We illustrate \citeauthor{pimentel2015variable}'s   \citeyearpar{pimentel2015variable} method and discuss its limitations using a publicly available dataset studying the effect of right heart catheterization (RHC) on mortality \citep{connors1996effectiveness}. RHC is a diagnostic procedure used in the management of critically ill patients in intensive care units. It is intended to guide therapy by providing detailed  and important information about patients' conditions. However, \cite{connors1996effectiveness} found that RHC increased mortality among critically ill patients, and this seminal study raised concerns about the routine use of RHC. Below, we follow the case study in \citet{rosenbaum2012optimal} and illustrate \citeauthor{pimentel2015variable}'s \citeyearpar{pimentel2015variable} method using a subset of this dataset consisting of patients under the age of 65. The analysis dataset consists of $n_t = 1194$ patients who received RHC and $n_c = 1804$ who did not. The treated-to-control ratio of approximately $1.5$ corresponds to a scenario where a variable-ratio match could be potentially useful.

To illustrate, 
we consider finely balancing the insurance type of patients. The insurance type is a categorical variable consisting of $7$ levels: Medicaid, Medicare, Medicare $\&$ Medicaid, No insurance, Private insurance, Private insurance $\&$ Medicare. Table \ref{tab:ninslcas_dist} summarizes the distributions of the insurance type variable in the entire cohort, treated cohort, and control cohort. For each insurance type level, there are more control patients (No RHC) than treated patients (RHC), and therefore, fine balance on the insurance type variable is in principle feasible. 

\begin{table}[!ht]
\centering
\begin{tabular}{lccc}
\hline
\textbf{Panel A} & All patients below 65 & Treated (RHC) & Control (No RHC) \\
 Insurance Type& $N = 2998$ & $n_t = 1194$ & $n_c = 1804$ \\
\hline
Medicaid & 611 & 182 & 429 \\
Medicare & 274 & 107 & 167 \\
Medicare \& Medicaid & 141 & 55 & 86 \\
No insurance & 271 & 113 & 158 \\
Private & 1544 & 675 & 869 \\
Private \& Medicare & 157 & 62 & 95 \\
\hline
& Patients with entire \\
\textbf{Panel B} & number $\in [2, 3)$ & Treated & Control \\
& $N = 561$ & $n_t = 176$ & $n_c = 385$ \\
\hline
Medicaid                & 191 &  50 & 141 \\
Medicare                &  63 &  17 &  46 \\
Medicare \& Medicaid    &  26 &   8 &  18 \\
No insurance            &  62 &  24 &  38 \\
Private                 & 197 &  70 & 127 \\
Private \& Medicare     &  22 &   7 &  15 \\
\hline
& Patients with entire \\
\textbf{Panel C} & number $\in [4, \infty)$ & Treated & Control \\
& $N = 21$ & $n_t = 5$ & $n_c = 16$ \\
\hline
Medicaid                & 10 &  3 &  7 \\
Medicare                &  1 &  0 &  1 \\
Medicare \& Medicaid    &  1 &  0 &  1 \\
No insurance            &  4 &  1 &  3 \\
Private                 &  4 &  1 &  3 \\
Private \& Medicare     &  1 &  0 &  1 \\
\hline
\end{tabular}
\caption{\textbf{Panel A}: Marginal distributions of the insurance type variable in the entire cohort, treated patients, and control patients. \textbf{Panel B and C}: Marginal distributions of the insurance type variable in subcohorts defined by patients whose entire number $\in [2, 3)$ and $[4, \infty)$. With each subcohort, distributions are shown separately for all patients, treated patients, and control patients.}
\label{tab:ninslcas_dist}
\end{table}

Following \citet{pimentel2015variable}, we first calculated the entire number for each patient and partitioned the entire cohort ($N = 2998$) into five subcohorts according to the estimated entire number. Within the subcohort with entire number between $[k, k+1)$, we performed a $1$-to-$k$ match with fine balance. However, this strategy was not always feasible. For instance, for patients with entire numbers falling in the interval $[2,3)$ (see Panel B of Table \ref{tab:ninslcas_dist}), a $1$-to-$2$ match with fine balance on the insurance type was not feasible because there were no sufficient controls for the level ``No insurance " ($24$ treated versus $38$ control patients) or the level ``Private" ($70$ treated versus $127$ control patients). As another example, for patients with entire number in $[4, \infty)$, a $1$-to-$4$ match with fine balance was in general not feasible because controls were lacking in this subcohort. \citet{pimentel2015variable}, in their discussion, acknowledged this limitation and suggested that one reasonable solution is to reduce the number of controls. For instance, instead of conducting a $1$-to-$2$ match for the subcohort with entire number between $2$ and $3$, one may instead conduct a pair match after examining Table \ref{tab:ninslcas_dist} Panel B. 

Moreover, even if the researcher achieves fine balance within each individual $1$-to-$k$ match (possibly by selecting a smaller $k$ within each entire-number–defined stratum), there is no guarantee that the final pooled matched sample also achieves fine balance. One sufficient condition to achieve fine balance in the pooled sample requires the marginal distribution of the categorical variable in the treated group to be identical across all entire-number–defined strata---a condition that rarely holds, because the entire number is a function of the propensity score, which is typically correlated with the variable targeted for fine balance. 

Motivated by these practical challenges, we propose an efficient, one-shot variable-ratio match algorithm that simultaneously addresses all these aforementioned limitations. In particular, the proposed algorithm can achieve fine balance on the desired categorical variable (or a combination of multiple categorical variables), as long as fine balance is feasible in the entire cohort before matching. For instance, in the RHC dataset, fine balance on insurance type is feasible when the entire cohort is viewed holistically, although it is no longer feasible within each stratum defined by the entire number. Compared to \citeauthor{pimentel2015variable}'s \citeyearpar{pimentel2015variable} two-step, ``divide-and-conquer" approach, our proposed method only solves one global optimization problem, instead of dividing the problem into several sub-problems, and the global optimization problem can be solved equally efficiently as multiple smaller problems. As we will demonstrate via simulation studies and in the RHC dataset, the new algorithm helps researchers avoid a lot of \emph{ad hoc} decisions, achieves desired fine balance, and often discards fewer control units (and hence maintaining a larger matched sample) compared to the current two-step approach. 

The rest of the article is organized as follows. Section \ref{sec: methods} develops the notation, defines the network structure underlying the optimization problem, and proves that the solution to the minimum-cost network flow problem yields an optimal variable-ratio matched sample subject to the fine balance constraint. Section \ref{sec: simulation} compares the performance of the proposed approach to the existing one. The new algorithm is applied to the RHC dataset in Section \ref{sec: case study}. We conclude with a discussion in Section \ref{sec: discussion}. \textsf{R} package and code to reproduce results in the paper can be found in the Supplemental Materials.

\section{Methods}
\label{sec: methods}
\subsection{Notation; a hypothetical RCT}
\label{subsec: method notation}
We consider a setting with $T$ treated units and $C \geq T$ control units to be matched. We let $\mathcal{T} = \{\tau_1, \dots, \tau_T\}$ and $\mathcal{C} = \{\gamma_1, \dots, \gamma _C\}$ denote the treated and control units, respectively. Furthermore, suppose we have a nominal covariate $X_{\text{nom}}$ with $B$ levels that researchers target for fine balance. For each level $b \in \{1, 2, \dots, B\}$, the treated group $\mathcal{T}$ has $n_b$ units, and the control group $\mathcal{C}$ has $N_b$ units. It follows that $\sum_{b=1}^B n_b = T$ and $\sum_{b=1}^B N_b = C$. We further assume $N_b \geq n_b$ for each level $b$, and we let $\kappa \in [1, \kappa_{\max}]$, where $\kappa_{\max}$ denotes the minimum ratio of $N_b$ to $n_b$ across all $B$ levels, that is: 
\[
\kappa_{\max} = \min_{b=1,\dots,B} \frac{N_b}{n_b}.
\]

Our goal is to embed observational data into the following hypothetical randomized controlled trial. In the first step, units are grouped into sets of varying sizes. In the second step, exactly one unit from each set is randomly assigned to treatment, while the remaining units in the set are assigned to control. Under this randomization scheme, the marginal distribution of covariates, including the nominal variable $X_{\text{nom}}$, would be balanced between treated and control units.  

To emulate this hypothetical RCT, we aim to match $\sum_{b=1}^B \lfloor \kappa n_b \rfloor$ control units to $T$ treated units, where $\lfloor \cdot \rfloor$ denotes the floor function, $\kappa \in [1, \kappa_{\max}]$, and each treated unit may be matched to $1$ or more controls. To enforce fine balance on the nominal covariate $X_{\text{nom}}$, the algorithm discards 
\begin{equation}\label{eq: Mb}
    M_b = N_b - \lfloor \kappa n_b \rfloor
\end{equation}
controls for each level $b \in \{1, \dots, B\}$. In total, $\sum_{b=1}^B M_b$ controls remain unmatched. If $\kappa n_b$ is an integer for all $b$, fine balance is achieved exactly: the distribution of $X_{\text{nom}}$ is identical across treated and matched control groups. If $\kappa n_b$ is non-integer for some $b$, the imbalance persists but is negligible, arising only from rounding.

\subsection{Network structure}
\label{subsec: method network}

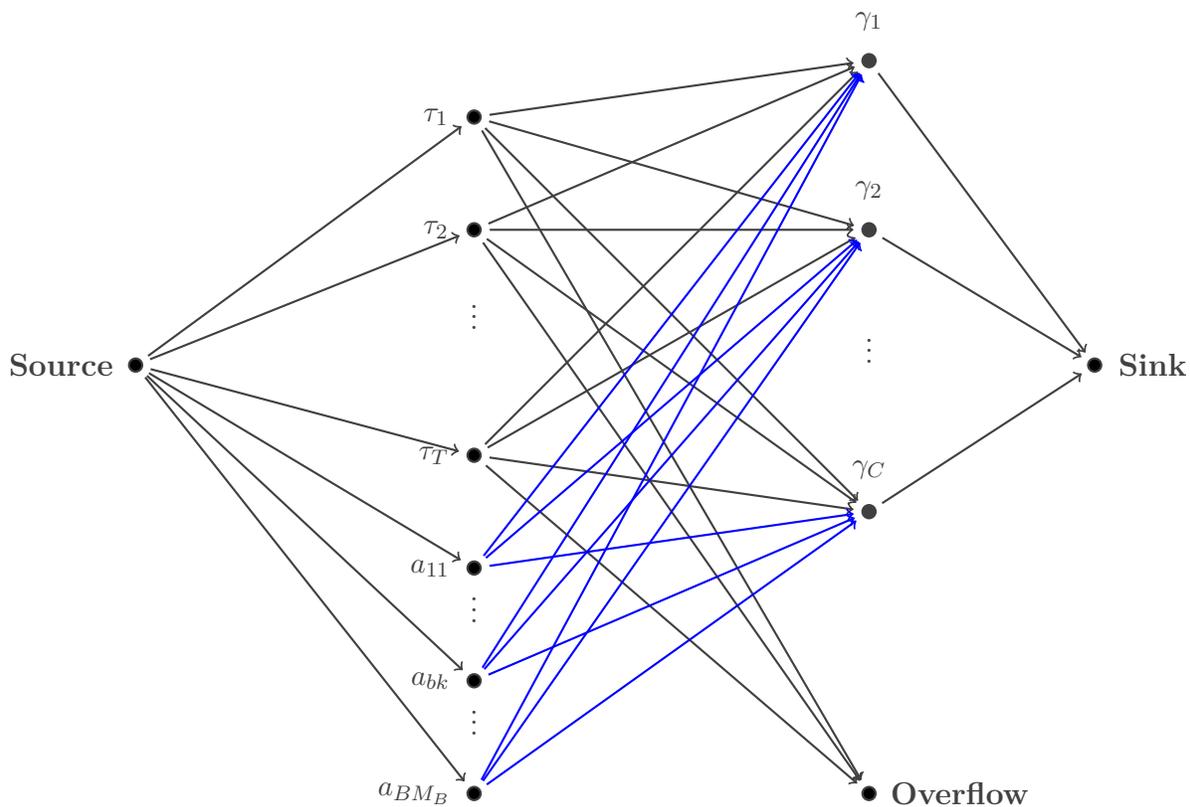
\begin{figure}[ht]
\centering

\begin{tikzpicture}[scale = 1.5, thick, color = darkgray,
  every node/.style={draw,circle},
  fsnode/.style={fill=black, inner sep = 0pt, minimum size = 5pt},
  ssnode/.style={fill=darkgray, inner sep = 0pt, minimum size = 5pt},
  every fit/.style={ellipse,draw,inner sep=-2pt,text width=2cm},
  shorten >= 3pt,shorten <= 3pt
]


\node[fsnode] (t1) at (0, 1) [label=left: {\small$\tau_1$}] {};
\node[fsnode] (t2) at (0, 0) [label=left: {\small$\tau_2$}] {};
\node[draw=none, inner sep=0pt] at (0, -0.7) {\vdots};
\node[fsnode] (tn) at (0, -2) [label=left: {\small$\tau_T$}] {};

\node[fsnode] (a1) at (0, -3) [label=left: {\small$a_{11}$}] {};
\node[draw=none, inner sep=0pt] at (0, -3.3) {\vdots};
\node[fsnode] (a2) at (0, -4) [label=left: {\small$a_{bk}$}] {};
\node[draw=none, inner sep=0pt] at (0, -4.3) {\vdots};
\node[fsnode] (a3) at (0, -5) [label=left: {\small$a_{BM_B}$}] {};

\node[ssnode] (c1) at (3.5, 1.5) [label=above: {\small$\gamma_1$}] {};
\node[ssnode] (c2) at (3.5, 0) [label=above: {\small$\gamma_2$}] {};
\node[draw=none, inner sep=0pt] at (3.5, -1) {\vdots};
\node[ssnode] (cn) at (3.5, -2.5) [label=above: {\small$\gamma_C$}] {};

\node [fill = black, inner sep = 0pt, minimum size = 5pt, label=right: \textbf{Overflow}] at (3.5, -5) (overflow) {};

\node [fill = black, inner sep = 0pt, minimum size = 5pt, label=left: \textbf{Source}] at (-3, -1.2) (source) {};

\node [fill = black, inner sep = 0pt, minimum size = 5pt, label=right: \textbf{Sink}] at (5.5, -1.2) (sink) {};


\draw [->] (t1) -- (c1);
\draw [->] (t1) -- (c2);
\draw [->] (t1) -- (cn);

\draw [->] (t2) -- (c1);
\draw [->] (t2) -- (c2);
\draw [->] (t2) -- (cn);

\draw [->] (tn) -- (c1);
\draw [->] (tn) -- (c2);
\draw [->] (tn) -- (cn);

\draw  [->]  (t1) -- (overflow);
\draw  [->]  (t2) -- (overflow);
\draw  [->]  (tn) -- (overflow);

\draw [->] (c1) -- (sink);
\draw [->] (c2) -- (sink);
\draw [->] (cn) -- (sink);

\draw [->, color=blue] (a1) -- (c1);
\draw [->, color=blue] (a1) -- (c2);
\draw [->, color=blue] (a1) -- (cn);

\draw [->, color=blue] (a2) -- (c1);
\draw [->, color=blue] (a2) -- (c2);
\draw [->, color=blue] (a2) -- (cn);

\draw [->, color=blue] (a3) -- (c1);
\draw [->, color=blue] (a3) -- (c2);
\draw [->, color=blue] (a3) -- (cn);

\draw [->] (source) -- (t1);
\draw [->] (source) -- (t2);
\draw [->] (source) -- (tn);
\draw [->] (source) -- (a1);
\draw [->] (source) -- (a2);
\draw [->] (source) -- (a3);
\end{tikzpicture}
\caption{The proposed network structure. Nodes of the form $\tau_t,~t = 1, \dots, T,$ and $\gamma_c,~c = 1, \dots, C,$ correspond to $T$ treated units and $C$ control units, respectively. Nodes of the form $a_{bk},~ b = 1, \dots, B,~k = 1, \dots, M_b,$ are auxiliary nodes that force the fine balance constraint. }
\label{fig: bipartite plot}
\end{figure}

We formulate our proposed one-shot variable-ratio matching with fine balance as a network flow optimization problem. Figure \ref{fig: bipartite plot} displays our proposed network structure. The $T$ treated units are represented by nodes $\{\tau_1, \dots, \tau_T\}$, and the $C$ control units to be matched are represented by nodes $\{\gamma_1, \dots, \gamma_C\}$. For each level $b$ of the nominal variable $X_{\text{nom}}$, we introduce $M_b$ auxiliary nodes. In Figure \ref{fig: bipartite plot}, the collection of nodes $\{a_{bk},b = 1, \dots, B, k = 1, \dots, M_b\}$ denote these auxiliary nodes, totaling $\sum_{b = 1}^B M_b$. Finally, we introduce a \textbf{Source} node, a \textbf{Sink} node and an \textbf{Overflow} node. Write $\mathcal{V}$ for the set of nodes in the proposed network:
\[
\mathcal{V} = \{\tau_1, \dots, \tau_T, \gamma_1, \dots, \gamma_C, a_{11}, \dots, a_{1M_1}, \dots, a_{B1}, \dots, a_{BM_B}, \textbf{Source}, \textbf{Sink}, \textbf{Overflow} \}.
\]
Together, our proposed network contains $|\mathcal{V}| = T + C + 2 + \sum^B_{b = 1} M_b$ nodes.

In a graph, an edge is an ordered pair of nodes, and the network in Figure \ref{fig: bipartite plot} contains six types of edges: (1) edges from the \textbf{Source} node to each treated unit node $\tau_t$, (2) edges from the \textbf{Source} node to each auxiliary node $a_{bk}$, (3) edges from each treated unit node $\tau_t$ to each control unit node $\gamma_c$, (4) edges from each auxiliary node $a_{bk}$ to each control unit node $\gamma_c$, (5) edges from each treated unit node $\tau_t$ to the \textbf{Overflow} node; and (6) edges from each control unit node $\gamma_c$ to the \textbf{Sink} node. Write $\mathcal{E}$ for the following set of edges:
\[
\begin{aligned}
    \mathcal{E} = \bigg\{&(\textbf{Source}, \tau_t), (\textbf{Source}, a_{bk}),
    (\tau_t, \gamma_c), (a_{bk}, \gamma_c), (\tau_t, \textbf{Overflow}), (\gamma_c, \textbf{Sink}): \\
    &t = 1,\dots,T,\; c = 1,\dots,C, \;  b = 1,\dots,B,\; k = 1,\dots,M_b \bigg\} 
\end{aligned}
\]

Each edge $e \in \mathcal{E}$ is associated with a capacity, denoted as $\text{cap}(e)$, which is the maximum units of flow allowed on that edge. Let $L \geq 1$ and $U \geq L$ denote the user-specified, maximum and minimum number of controls, respectively, that can be matched to each treated unit in a matched set. Our designed network in Figure \ref{fig: bipartite plot} has the following specifications for edge capacity:

\begin{enumerate}
    \item $\text{cap}(e) = U$ for $e \in \{(\textbf{Source}, \tau_t):~t = 1, \dots, T\}$;
   
    \item $\text{cap}(e) = 1$ for $e \in \{(\textbf{Source}, a_{bk}), (\tau_t, \gamma_c), (a_{bk}, \gamma_c),(\gamma_c, \textbf{Sink}): t = 1, \dots, T,~ c = 1, \dots, C,~ b = 1, \dots, B,~ k = 1, \dots, M_b\}$;
    
    \item $\text{cap}(e) = U - L$ for $e \in \{(\tau_t, \textbf{Overflow}):~t = 1, \dots, T\}$.
\end{enumerate}

A feasible flow is a map from the set of edges to the integer set $\{0, 1, 2, \dots\}$, denoted as $l : \mathcal{E} \mapsto \{0, 1, 2, \dots\}$, which satisfies the following constraints:

\begin{enumerate}
    \item Edge capacities are respected, in the sense that $0 \leq l(e) \leq \text{cap}(e),~\forall e \in \mathcal{E}$;
    \item The \textbf{Source} node supplies $UT + \sum_{b=1}^{B} M_b$ units of flow:
    \[
    \sum_{t = 1}^T l(\textbf{Source}, \tau_t) + \sum_{b = 1}^B \sum_{k = 1}^{M_b} l(\textbf{Source},a_{bk}) = UT + \sum_{b=1}^B M_b;
    \]
    \item The \textbf{Overflow} node absorbs $UT + \sum_{b=1}^B M_b - C$ units of flow; the \textbf{Sink} node absorbs $C$ units of flow:
\[
\sum_{t=1}^T l(\tau_t, \textbf{Overflow}) = UT + \sum_{b=1}^B M_b - C \quad \text{and} \quad
\sum_{c=1}^C l(\gamma_c, \textbf{Sink}) = C;
\]
    
    \item Flow is conserved for all nodes in $\mathcal{V} \ \backslash \{\textbf{Source}, \textbf{Sink}, \textbf{Overflow}\}$. That is, for any $v \in \mathcal{V} \ \backslash \{\textbf{Source}, \textbf{Sink}, \textbf{Overflow}\}$, the inflow equals the outflow:
\[\sum_{(v^{\prime},v) \in \mathcal{E}} l(v^{\prime}, v) = \sum_{(v,v^{\prime\prime}) \in \mathcal{E}} l(v, v^{\prime\prime}).\] 
\end{enumerate}

Finally, each edge $e \in \mathcal{E}$ is associated with a nonnegative cost denoted as $\text{cost}(e) \geq 0$. In the proposed network structure, we let:
\begin{enumerate}
    \item $\text{cost}(\tau_t, \gamma_c) = \delta_{tc}$ for all $t = 1 \dots T$ and $c = 1, \dots, C$, where $\delta_{tc}$ denotes some measure of covariate distance like Mahalanobis distance or robust Mahalanobis distance between $\tau_t$ and $\gamma_c$ \citep{rosenbaum2010design};
    \item $\text{cost}(\gamma_c, \textbf{Sink}) = \text{cost}(\tau_t, \textbf{Overflow}) = \text{cost}(\textbf{Source}, \tau_t) = \text{cost}(\textbf{Source}, a_{bk}) = 0$ for all $t= 1, \dots, T$, $c = 1, \dots, C$, $b = 1, \dots, B$, and $k = 1, \dots, M_b$;
    \item $\text{cost}(a_{bk}, \gamma_c) = 0$ if the nominal covariate $X_{\text{nom}}$ of $\gamma_c$ is at level $b$, and $\text{cost}(a_{bk}, \gamma_c) =  \infty$ otherwise, for all $c = 1, \dots, C$, $b = 1, \dots, B$, and $k = 1, \dots, M_b$.
\end{enumerate}

For any feasible flow $l$, its cost 
equals $\text{cost}(l) = \sum_{e \in \mathcal{E}} \text{cost}(e)\cdot l(e)$. A feasible flow $l^\star$ is a minimum cost flow if $\text{cost}(l^\star) \leq \text{cost}(l)$ for any feasible flow $l$.

\subsection{Network flow problem; solution}
\label{subsec: methods solution}
A minimum-cost flow can be found in $O(|\mathcal{V}|\cdot|\mathcal{E}| + |\mathcal{V}|^2\log(|\mathcal{V}|))$ operations \citep{korte2011combinatorial}, and the computation can be done via standard network flow algorithms available in open source softwares like \textsf{R} pr \textsf{Python}. In our proposed network, $|\mathcal{V}| = O(C)$ because $C \geq T$ and $\sum_{b=1}^B M_b < C$. Additionally, $|\mathcal{E}| = 2T + C + \sum_{b=1}^B M_b + TC + \sum_{b=1}^B M_b C = O(C^2)$. Hence, the minimum cost can be found in $O(C^3)$ operations. 


Let $\mathcal{F}$ denote the collection of variable-ratio matches that satisfy: (i) each treated unit is matched to between $L$ and $U$ controls; and (ii) the nominal covariate $X_{\text{nom}}$ is finely balanced. The desired optimal variable-ratio match $f^\ast(\cdot) \in \mathcal{F}$ is the one that minimizes the total within-matched-set, treated-to-control covariate distance among all elements in $\mathcal{F}$.

Proposition \ref{prop: f corresponds to l} proves that for any variable-ratio match $f(\cdot) \in \mathcal{F}$, there exists a map $l(\cdot)$ in our proposed network that corresponds to $f(\cdot)$. Therefore, if we let $\mathcal{L}$ denote the collection of all feasible integral flows, then there exists some $l(\cdot) \in \mathcal{L}$ that corresponds to the desired optimal variable-ratio match $f^\ast(\cdot)$ had such an optimal match existed.

\begin{Proposition}\label{prop: f corresponds to l}
    If there exists a variable-ratio match where each treated unit is matched to between $L \geq 1$ and $U \geq L$ control units and the fine balance constraint on $X_{\text{nom}}$ is satisfied, then there exists an integral flow $l(\cdot)$ in the proposed network in Figure \ref{fig: bipartite plot} such that the subclassification induced by $l(\cdot)$ corresponds exactly to this match.
\end{Proposition}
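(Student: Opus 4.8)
The plan is to construct the desired integral flow $l(\cdot)$ explicitly from an arbitrary feasible variable-ratio match $f(\cdot) \in \mathcal{F}$ and then to check, constraint by constraint, that it is a feasible flow on the network of Figure \ref{fig: bipartite plot}. Fix such an $f$. For each treated unit $\tau_t$, let $S_t \subseteq \{\gamma_1,\dots,\gamma_C\}$ denote the set of controls matched to $\tau_t$; by hypothesis $L \le |S_t| \le U$, and the $S_t$ are pairwise disjoint. Since the fine balance constraint in $\mathcal{F}$ discards exactly $M_b = N_b - \lfloor \kappa n_b \rfloor$ controls at each level $b$, the $C - \sum_{t}|S_t| = \sum_{b=1}^{B} M_b$ unmatched controls split into exactly $M_b$ controls of level $b$; I would enumerate the unmatched level-$b$ controls as $\gamma_{c(b,1)},\dots,\gamma_{c(b,M_b)}$.

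I would then define $l$ as follows: $l(\textbf{Source},\tau_t) = U$ and $l(\tau_t,\textbf{Overflow}) = U - |S_t|$ for every $t$; $l(\tau_t,\gamma_c) = 1$ if $\gamma_c \in S_t$ and $0$ otherwise; $l(\textbf{Source},a_{bk}) = 1$ and $l(a_{bk},\gamma_{c(b,k)}) = 1$ for every $b$ and $k = 1,\dots,M_b$, with $l(a_{bk},\gamma_c) = 0$ for all other controls; and $l(\gamma_c,\textbf{Sink}) = 1$ for every $c$. All assigned values are nonnegative integers, so integrality holds by construction, and since each auxiliary edge that carries flow, namely $(a_{bk},\gamma_{c(b,k)})$, connects to a control of level $b$, every such edge has cost $0$ and $l$ has finite cost.

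The verification would then proceed by inspection. For capacities, the only nonobvious edge is $(\tau_t,\textbf{Overflow})$, where $0 \le U - |S_t| \le U - L = \text{cap}(\tau_t,\textbf{Overflow})$ follows from $L \le |S_t| \le U$; all other edges carry either $U$ on $(\textbf{Source},\tau_t)$ (meeting its capacity) or a value in $\{0,1\}$ (meeting the unit capacities). For flow conservation, each $\tau_t$ has inflow $U$ and outflow $|S_t| + (U - |S_t|) = U$; each $a_{bk}$ has inflow and outflow $1$; and each $\gamma_c$ receives exactly one unit of flow---from its matched treated node if $\gamma_c$ is matched, or from the auxiliary node assigned to it if $\gamma_c$ is discarded---and forwards it to \textbf{Sink}. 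The supply/demand identities then reduce to arithmetic: \textbf{Source} emits $UT + \sum_b M_b$; \textbf{Sink} absorbs $\sum_c l(\gamma_c,\textbf{Sink}) = C$; and \textbf{Overflow} absorbs $\sum_t (U - |S_t|) = UT - (C - \sum_b M_b) = UT + \sum_b M_b - C$, exactly as required. Finally, the subclassification induced by $l$---grouping $\tau_t$ with those $\gamma_c$ for which $l(\tau_t,\gamma_c) = 1$---is by construction the match $f$.

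The step I expect to need the most care is the bookkeeping tying the auxiliary nodes and the \textbf{Overflow} node to $f$: one must confirm that the fine balance constraint genuinely forces exactly $M_b$ discarded controls at each level $b$, so that $a_{b1},\dots,a_{bM_b}$ can be placed in bijection with them through cost-$0$ edges only, and that the flow sent from each treated node to \textbf{Overflow} never exceeds $U - L$. Neither point is deep---the auxiliary matching is immediate because each level has equal counts on both sides joined by a complete bipartite connection---but the global flow-balance arithmetic at \textbf{Source}, \textbf{Sink}, and \textbf{Overflow} is where an off-by-one error would most easily creep in, so I would write those identities out in full.
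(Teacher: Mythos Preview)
Your proposal is correct and follows essentially the same approach as the paper: explicitly construct the flow from the match by routing $U$ units through each treated node (with the slack $U-|S_t|$ to \textbf{Overflow}), pairing the $M_b$ discarded level-$b$ controls bijectively with the auxiliary nodes $a_{b1},\dots,a_{bM_b}$, and sending one unit from every control to \textbf{Sink}. If anything, your write-up is slightly more complete than the paper's, since you explicitly verify the \textbf{Overflow} demand identity $\sum_t (U - |S_t|) = UT + \sum_b M_b - C$ and flow conservation at every intermediate node, whereas the paper leaves some of these implicit.
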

\begin{proof}
Fix a variable-ratio match $f$. We construct an integral flow $l(\cdot)$ corresponding to $f$ and show that $l(\cdot)$ is feasible. Let $l(\tau_t, \gamma_c) = 1$ if the match $f$ matches $\tau_t$ to $\gamma_c$ and $l(\tau_t, \gamma_c) = 0$ otherwise. Then we have $0 \leq l(\tau_t, \gamma_c) \leq 1$, and the capacity constraints on edges of the form $(\tau_t, \gamma_c)$ are satisfied. Because $f(\cdot)$ satisfies the fine balance constraint, $M_b$ control units are removed (not included in the final matched sample) for each level $b \in \{1, \dots, B\}$. Relabel these unmatched controls as $\{\gamma_{b1},\dots, \gamma_{bM_b}\}$, where $\gamma_{bk}$ denotes the $k$-th unmatched control at level $b$. For each unmatched control $\gamma_{bk}$, identify the corresponding auxiliary node $a_{bk}$ and set $l(\textbf{Source}, a_{bk}) = l(a_{bk}, \gamma_{bk}) = 1$ and $l(a_{bk}, \gamma_{b'k'}) = 0$ if $b \neq b'$ or $k \neq k'$. In this way, each auxiliary node $a_{bk}$ transports one unit of flow from the \textbf{Source} node to one control node and the capacity constraints on edges of the form $(a_{bk}, \gamma_c)$ are satisfied. If we further let $l(\textbf{Source}, \tau_t) = U$ for all $\tau_t$, then we have
 \[
    \sum_{t = 1}^T l(\textbf{Source}, \tau_t) + \sum_{b = 1}^B \sum_{k = 1}^{M_b} l(\textbf{Source},a_{bk}) = UT + \sum_{b=1}^B M_b,
\]
satisfying the supply constraint at \textbf{Source}.
For edges from a treated unit to the \textbf{Overflow} node, set $l(\tau_t, \textbf{Overflow}) = U - \#\tau_t$, where $\# \tau_t$ denotes the number of controls matched to $\tau_t$ in $f(\cdot)$. Because $\#\tau_t \in [L, U]$, capacity constraints on edges of the form  $(\tau_t,\textbf{Overflow})$ hold.


Because each control node is matched to one treated or auxiliary node, for each $c = 1, \dots, C$, we have
\[
l(\gamma_c, \textbf{Sink}) = \sum_t l(\tau_t, \gamma_c) + \sum_{b,k} l(a_{bk}, \gamma_c) =  1 \text{ and } \sum_{c=1}^C l(\gamma_c, \textbf{Sink}) = C,
\]
satisfying the demand constraint at \textbf{Sink}.
We have thus shown that the constructed integral flow $l(\cdot)$ satisfies the three constraints defining a feasible flow. 
\end{proof}

\begin{Proposition}
   If $l(\cdot)$ is an integral flow with $\text{cost}(l) < \infty$, then the subclassification induced by $l$, denoted as $ \cup_{t=1}^T\{(\tau_t, \gamma_c): l(\tau_t, \gamma_c) = 1 
\text{ for any } c\}$, is a variable-ratio match that pairs each treated unit with between $L \geq 1$ and $U \geq L$ control units and achieves fine balance on $X_{\text{nom}}$.
\end{Proposition}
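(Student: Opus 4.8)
The plan is to reverse-engineer the matching from the flow by a sequence of flow-conservation and capacity arguments, the crucial input being that $\text{cost}(l) < \infty$ rules out any flow along the $\infty$-cost edges. First I would observe that finiteness of the cost forces $l(a_{bk}, \gamma_c) = 0$ whenever the nominal covariate of $\gamma_c$ is not at level $b$; since every capacity-$1$ edge carries $0$ or $1$ unit, this means each auxiliary node routes flow only to controls of its own level. Next I would pin down the flow on the edges incident to \textbf{Source} and \textbf{Sink}: the supply constraint $\sum_t l(\textbf{Source},\tau_t) + \sum_{b,k} l(\textbf{Source}, a_{bk}) = UT + \sum_b M_b$, together with the capacity bounds $l(\textbf{Source},\tau_t) \le U$ and $l(\textbf{Source}, a_{bk}) \le 1$, can only be met by saturating all of these edges, so $l(\textbf{Source},\tau_t) = U$ for every $t$ and $l(\textbf{Source}, a_{bk}) = 1$ for every $(b,k)$; likewise the demand constraint at \textbf{Sink} forces $l(\gamma_c, \textbf{Sink}) = 1$ for every $c$.

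With these in hand, I would analyze the control nodes. Flow conservation at $\gamma_c$ gives $\sum_t l(\tau_t,\gamma_c) + \sum_{b,k} l(a_{bk}, \gamma_c) = l(\gamma_c, \textbf{Sink}) = 1$, so each control receives exactly one unit of flow, from a single predecessor --- either one treated node or one auxiliary node, never both and never two of the same type. In particular $l(\tau_t, \gamma_c) = l(\tau_{t'}, \gamma_c) = 1$ with $t \ne t'$ is impossible, so the matched sets $\{(\tau_t,\gamma_c): l(\tau_t,\gamma_c)=1\}$ are pairwise disjoint and the induced subclassification is a genuine matching of controls to treated units.

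Then I would establish fine balance by counting. Conservation at each auxiliary node reads $l(\textbf{Source}, a_{bk}) = \sum_c l(a_{bk}, \gamma_c)$, so $a_{bk}$ sends exactly one unit to exactly one control, which must be at level $b$; and since each control absorbs at most one unit of inflow, the targets of the $M_b$ nodes $a_{b1},\dots,a_{bM_b}$ are $M_b$ distinct level-$b$ controls. These are precisely the controls excluded from the matched sample, so exactly $N_b - M_b = \lfloor \kappa n_b \rfloor$ of the level-$b$ controls are matched to treated units, for every $b$. The matched control group therefore contains $\lfloor \kappa n_b\rfloor$ units at level $b$ against $n_b$ treated units, which is fine balance on $X_{\text{nom}}$ (exact when each $\kappa n_b$ is an integer, and otherwise off only by rounding, as discussed in Section \ref{subsec: method notation}).

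Finally, the matched-set sizes follow from conservation at the treated nodes: $\sum_c l(\tau_t,\gamma_c) = l(\textbf{Source},\tau_t) - l(\tau_t, \textbf{Overflow}) = U - l(\tau_t, \textbf{Overflow})$, and since $0 \le l(\tau_t, \textbf{Overflow}) \le U - L$ the number of controls matched to $\tau_t$ lies in $[L,U]$; as $L \ge 1$, every treated unit is matched to at least one control. Assembling these facts yields a variable-ratio match with all matched-set ratios between $L$ and $U$ and with $X_{\text{nom}}$ finely balanced. I do not anticipate a serious obstacle --- the argument is a routine flow chase --- but the one point requiring care is the distinctness claim for the auxiliary targets, combining finite cost, unit out-flow at $a_{bk}$, and unit in-flow at $\gamma_c$, since that is exactly what makes the number of discarded level-$b$ controls equal $M_b$ and hence delivers fine balance rather than a weaker conclusion.
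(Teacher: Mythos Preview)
Your proposal is correct and follows essentially the same approach as the paper's proof: saturate the \textbf{Source} edges from the supply constraint plus capacities, saturate the $(\gamma_c,\textbf{Sink})$ edges from the demand constraint, use flow conservation at control nodes to show each control has a unique predecessor, invoke $\text{cost}(l)<\infty$ to force auxiliary flow only to same-level controls, count to get $N_b - M_b = \lfloor\kappa n_b\rfloor$ matched controls per level, and finish with conservation at treated nodes together with the \textbf{Overflow} capacity to bound matched-set sizes in $[L,U]$. The only difference is the order in which you present these steps; the paper derives the level-$b$ constraint on auxiliary targets after locating the unique target, whereas you rule out cross-level edges up front, but the content is the same.
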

\begin{proof}
   Because the \textbf{Source} node emits $UT + \sum_{b=1}^B M_b$ units of flow, and flow is conserved at each treated and auxiliary node, the capacity constraints ensure that each treated node $\tau_t$ transports exactly $U$ units, while each auxiliary node $a_{bk}$, transports exactly one unit.
   As one unit of flow passes through each auxiliary node, there exists a unique $\gamma_c$ such that $l(a_{bk}, \gamma_c) = 1$ because flow is conserved at each node $a_{bk}$. We then verify that $\gamma_c$ must be at level $b$ of $X_{\text{nom}}$. For this $\gamma_c$, because
   \[
   \sum_t l(\tau_t, \gamma_c) + \sum_{b',k'} l(a_{b'k'}, \gamma_c) = l(\gamma_c, \textbf{Sink}) \in \{0,1\},
   \]
we have $l(a_{bk}, \gamma_c) = l(\gamma_c, \textbf{Sink}) = 1$ and $l(\tau_t, \gamma_c) = l(a_{b'k'},\gamma_c) = 0$ for all $t$ and all $(b',k') \neq (b,k)$. If $\gamma_c$ is not at level $b$ of $X_{\text{nom}}$, then $\text{cost}(a_{bk}, \gamma_c) = \infty$, which would imply $\text{cost}(l) = \infty$, contradicting the assumption that $\text{cost}(l) < \infty$. Therefore, $\gamma_c$ must be at level $b$ of the nominal covariate. In total, $\sum_b M_b$ distinct control nodes are matched to auxiliary nodes, with $M_b$ distinct control nodes from each level $b$ of the nominal variable.
By equation \eqref{eq: Mb}, for each level $b$, the number of controls at level $b$ matched to treated nodes is
\[
N_b - M_b = N_b - (N_b - \lfloor \kappa n_b \rfloor) = \lfloor \kappa n_b \rfloor.
\]
Since $\lfloor \kappa n_b \rfloor \approx \kappa n_b$ and the treated group has $n_b$ units at level $b$, the ratio of matched controls to treated units at each level is approximately $\kappa$, achieving fine balance up to rounding error.

The \textbf{Sink} node absorbs \( C \) units of flow, and each unit comes from a distinct control node because $0 \leq l(\gamma_c, \textbf{Sink}) \leq 1$ for $c = 1, \dots, C$. For control nodes not matched to auxiliary nodes, each receives exactly one unit of flow from some treated unit. To see this, let $\gamma_c$ be such a control node. Then $\sum_{b,k} l(a_{bk}, \gamma_c) = 0$, and by flow conservation at $\gamma_c$, \[
\sum_{t} l(\tau_t, \gamma_c) = l(\gamma_c, \textbf{Sink}) = 1.
\]
Since $l(\tau_t, \gamma_c) \in \{0,1\}$ for all $t$, there exists a unique $t^*$ such that $l(\tau_{t^*}, \gamma_c) = 1$. Each treated node $\tau_t$ transports \( U \) units of flow. By flow conservation at $\tau_t$, it follows that
\[
U = l(\textbf{Source}, \tau_t) = \sum_c l(\tau_t, \gamma_c) + l(\tau_t, \textbf{Overflow}).
\]
Since $\text{cap}(\tau_t, \textbf{Overflow}) =U-L$ and $l(\tau_t, \gamma_c) \in \{0,1\}$ for all $c$, we have
\[
0 \leq l(\tau_t, \textbf{Overflow}) \leq U - L \quad \Rightarrow \quad L \leq \sum_c l(\tau_t, \gamma_c) \leq U.
\]
Therefore, each treated node necessarily sends one unit of flow to between \( L \) and \( U \) control units. Hence  $ \cup_{t=1}^T\{(\tau_t, \gamma_c): l(\tau_t, \gamma_c) = 1 
\text{ for any } c\}$ forms the desired variable-ratio match.
\end{proof}

Corollary 1 is an immediate consequence of Proposition 2.

\begin{Corollary}
    Let \( l \) be a minimum-cost integral flow for the network. If \( \text{cost}(l) < \infty \), then the subclassification induced by \( l \), equals $f^\ast(\cdot)$. If \( \text{cost}(l) = \infty \), then $f^\ast(\cdot)$ does not exist.
\end{Corollary}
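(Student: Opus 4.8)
The plan is to show that the correspondence between integral flows and variable-ratio matches supplied by Proposition \ref{prop: f corresponds to l} and Proposition 2 is \emph{cost-preserving}, so that minimizing the network cost over feasible integral flows coincides with minimizing the total within-matched-set, treated-to-control distance over $\mathcal{F}$. Granting this, Corollary 1 is just the observation that a minimizer on one side of the correspondence maps to a minimizer on the other, together with the contrapositive in the infinite-cost case.

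First I would establish the cost bookkeeping. For any integral flow $l$ with $\text{cost}(l) < \infty$, I would show $\text{cost}(l) = \sum_{\{(\tau_t,\gamma_c):\, l(\tau_t,\gamma_c)=1\}} \delta_{tc}$, which by Proposition 2 is exactly the total within-set distance of the induced match $f_l$. This follows by inspecting the cost specification in Section \ref{subsec: method network}: the only edges with nonzero \emph{finite} cost are the $(\tau_t,\gamma_c)$ edges, carrying cost $\delta_{tc}$; all \textbf{Source}-, \textbf{Sink}-, and \textbf{Overflow}-incident edges have cost $0$; and since $\text{cost}(l)<\infty$, no $(a_{bk},\gamma_c)$ edge of cost $\infty$ can carry flow, so the auxiliary edges contribute $0$. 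Conversely, for the flow $l_f$ built from a given $f\in\mathcal{F}$ in the proof of Proposition \ref{prop: f corresponds to l}, the auxiliary edges used are the $(a_{bk},\gamma_{bk})$ with $\gamma_{bk}$ at level $b$, hence of cost $0$, so $\text{cost}(l_f)$ also equals the total within-set distance of $f$, which is finite provided all $\delta_{tc}$ are finite.

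Next, for the case $\text{cost}(l)<\infty$: by Proposition 2, $f_l\in\mathcal{F}$, so $\mathcal{F}\neq\emptyset$. For any competitor $f\in\mathcal{F}$, Proposition \ref{prop: f corresponds to l} yields an integral flow $l_f\in\mathcal{L}$ whose cost, by the previous step, equals the total within-set distance of $f$. Since $l$ is a minimum-cost integral flow, $\text{cost}(l)\le\text{cost}(l_f)$, and hence the total distance of $f_l$ is no larger than that of $f$ for every $f\in\mathcal{F}$; thus $f_l$ attains the minimum in the definition of $f^\ast$, i.e.\ $f_l = f^\ast$ (and if one does not impose uniqueness of the minimizer, $f_l$ is a valid realization of $f^\ast$). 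For the case $\text{cost}(l)=\infty$: if some $f\in\mathcal{F}$ existed, Proposition \ref{prop: f corresponds to l} and the cost bookkeeping would produce $l_f\in\mathcal{L}$ with $\text{cost}(l_f)<\infty$, forcing $\text{cost}(l)\le\text{cost}(l_f)<\infty$, a contradiction; hence $\mathcal{F}=\emptyset$ and $f^\ast$ does not exist.

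The conceptual content is light; the main things to be careful about are (i) the bookkeeping in the first step — in particular the clean equivalence ``$\text{cost}(l)<\infty$ iff no forbidden auxiliary edge carries flow,'' which is what makes network cost literally equal within-set distance — and (ii) invoking the standard integrality theorem for minimum-cost flow problems with integer capacities and integer supplies/demands, so that restricting attention to integral flows $l$ costs nothing and the comparison against the integral flows $l_f$ from Proposition \ref{prop: f corresponds to l} is legitimate. One small caveat deserving a sentence: $\mathcal{L}$ is nonempty precisely when variable-ratio matches respecting the $[L,U]$ ratio constraint (ignoring fine balance) exist, so the statement is non-vacuous only in that regime; otherwise no minimum-cost integral flow exists and $f^\ast$ trivially does not either.
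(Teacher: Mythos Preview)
Your proposal is correct and follows essentially the same route as the paper, which treats the corollary as an immediate consequence of Propositions~\ref{prop: f corresponds to l} and~2 and offers only a two-sentence sketch. Your write-up is in fact more complete: you make explicit the cost-preserving nature of the correspondence (that only the $(\tau_t,\gamma_c)$ edges contribute to $\text{cost}(l)$ when it is finite) and flag the integrality theorem and the nonemptiness of $\mathcal{L}$, both of which the paper leaves implicit.
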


By Proposition 1, we know some feasible flow $l(\cdot) \in \mathcal{L}$ corresponds to $f^\ast(\cdot)$ had $f^\ast(\cdot)$ existed. By Proposition 2, any feasible flow corresponding to the designed network in Figure \ref{fig: bipartite plot} with a finite cost induces some $f(\cdot) \in \mathcal{F}$. Therefore, to find $f^\ast(\cdot)$, it suffices to find the feasible flow $l(\cdot)$ that minimizes the total cost among all flows in $\mathcal{L}$, as formalized by Corollary 1.

\section{Simulation study}
\label{sec: simulation}
\subsection{Goal, structure, and metrics of success}
\label{subsec: simulation setup}
Our primary goal of the simulation studies is to compare our proposed variable-ratio match method to the ``divide-and-conquer" approach in \citet{pimentel2015variable}. We considered datasets of sample size $n = 3000$. For each unit, we generated a binary treatment assignment $Z$ independently from a Bernoulli distribution with $p = \mathbb{P}(Z=1) = 0.3$ or $0.35$. For each treated unit, we then generated $6$ covariates, $C_1$ to $C_6$, where $C_1$ is drawn from a normal distribution with mean $\mu = 0.25$ or $0.20$ and standard deviation $1$, $C_i$ follows a standard normal distribution for $i = 2, 3, 4, 5$, and $C_6$ is a categorical covariate with $\mathbb{P}(C_6=1) = 0.07$, $\mathbb{P}(C_6=2) = 0.48$ and $\mathbb{P}(C_6=3) = 0.45$. For each control unit, $C_i$ follows a standard normal distribution for $i = 1, 2, 3, 4, 5$, and $C_6$ is a categorical covariate with $\mathbb{P}(C_6=1) = 0.1$, $\mathbb{P}(C_6=2) = 0.5$ and $\mathbb{P}(C_6=3) = 0.4$. Among $5$ continuous covariates, treated and control groups only differed in $C_1$; however, the matching method does not know this, and it matches on all $5$ covariates.

For each generated dataset, we considered the following matching methods:

\begin{enumerate}
    \item $\mathcal{M}^{\text{TS}}$: two-step variable-ratio matching method as described in \citet{pimentel2015variable}. Specifically, we estimated the entire number for each unit by fitting a logistic-regression-based propensity score model, stratified all units based on the estimated entire numbers, and performed a $1$-to-$k$ match with fine balance on $C_6$ for units with entire numbers between $k$ and $k+1$. When it is infeasible to conduct a $1$-to-$k$ match with fine balance, we then attempted a $1$-to-$(k-1)$ match with fine balance, a $1$-to-$(k-2)$ match with fine balance, etc. We let the maximum $k$ be $4$, so that the last stratum consisted of units with estimated entire number $\in [4, \infty)$. When the entire-number-based stratum contained too few units, defined as fewer than $25$, it was then merged to the previous stratum (e.g., stratum $[k, k+1)$ merged to $[k-1, k)$). When conducting each optimal match with fine balance in each stratum, we used the Mahalanobis distance calculated based on covariates $C_1$ to $C_5$.
    

    \item $\mathcal{M}^{\text{OS}}$: one-shot variable-ratio matching method proposed in this article. We considered three different choices of $\kappa$: $\kappa = \kappa_{\max}$, $0.9\kappa_{\max}$, and $0.8\kappa_{\max}$. We will denote these three implementations as $\mathcal{M}^{\text{OS}}_{\kappa_{\max}}$, $\mathcal{M}^{\text{OS}}_{0.9}$ and $\mathcal{M}^{\text{OS}}_{0.8}$, respectively. All three implementations finely balanced $C_6$, although they differed in the size of the final matched comparison group. We set $L = 1$ and $U = 4$, allowing each treated unit to be matched with between $1$ and $4$ control units. All three implementations of $\mathcal{M}^{\text{OS}}$ used the same Mahalanobis distance, calculated based on $C_1$ through $C_5$, as in $\mathcal{M}^{\text{TS}}$. 
\end{enumerate}


The quality of the matched control group produced by each matching algorithm was evaluated using three criteria. First, we computed the standardized mean difference (SMD) of covariate $C_1$, denoted $\textsf{SMD}_{C_1}$. This is defined as the difference in means between the treated and matched control groups, divided by the pooled standard deviation before matching. Second, we measured the total variation distance between the marginal distributions of $C_6$ in the treated and matched control groups, defined as  $\textsf{TV}_{C_6} = \frac{1}{2} \sum_{i \in \{1,2,3\}} |P(i) - Q(i)|$ where $P$ and $Q$ denote the probability mass function in the treated and matched control groups, respectively. When fine balance on $C_6$ is achieved, $\textsf{TV}_{C_6} = 0$. Because covariates $C_2$ through $C_5$ are not associated with treatment assignment, they are expected to be balanced automatically after matching. Third, we recorded the number of control units retained in the final matched comparison group, denoted $n_c$. In general, better matching quality is indicated by a smaller $\textsf{SMD}_{C_1}$, a smaller $\textsf{TV}_{C_6}$, and a larger $n_c$. The simulation study was repeated 200 times.

In addition to these $3$ metrics of success, we also recorded the number of times a $1$-to-$k$ match with fine balance was infeasible in the stratum with entire number between $k$ and $k+1$ (and hence downgraded to a $1$-to-$(k-1)$ match) when implementing $\mathcal{M}^{\text{TS}}$. We also recorded the number of matched pairs, triplets, quadruplets, and quintuplets in the final matched comparison group. Finally, we recorded the computation time for each matching algorithm on a standard laptop computer. 

\subsection{Results}
\label{subsec: simulation results}
Table \ref{tab:OSvDC_2mu} presents the mean and standard deviation of $\textsf{SMD}_{C_1}$, $\textsf{TV}_{C_6}$, and $n_c$ across $200$ simulated datasets under the setting where $p = 0.3$ and $\mu = 0.25$ or $0.20$. Results are shown for each of the three implementations of the proposed one-shot method ($\mathcal{M}^{\text{OS}}$ with $\kappa = \kappa_{\max}$, $0.9\kappa_{\max}$ and $0.8\kappa_{\max}$) and $\mathcal{M}^{\text{TS}}$. The table also summarizes the structure of the matched sets, including how many matched sets are pairs, triplets, quadruplets, and quintuplets. Average computation cost (in seconds) is also reported. Simulation results corresponding to $p = 0.35$ are qualitatively similar to those with $p = 0.30$ and can be found in the Supplemental Material A.

\begin{table}[htb]
\centering
\begin{tabular}{lcccc}
\toprule
& \multicolumn{4}{c}{$p = 0.3,~\mu=0.25$} \\
\cmidrule(lr){2-5}
Metric & $\mathcal{M}^{\text{OS}}_{\kappa_{\max}}$ 
& $\mathcal{M}^{\text{OS}}_{0.9}$ & $\mathcal{M}^{\text{OS}}_{0.8}$ 
& $\mathcal{M}^{\text{TS}}$ \\
\midrule
$\textsf{SMD}_{C_1}$ & 0.22 (0.04) & 0.18 (0.03) & 0.15 (0.03) & 0.26 (0.06) \\
$\textsf{TV}_{C_6}$ & 0.08 (0.06) & 0.05 (0.06) & 0.06 (0.06) & 0.73 (0.45) \\
$n_c$ & 1870 (81) & 1680 (73) & 1490 (65) & 1480 (176) \\
Matched set structure & & & & \\
\quad Pair & 420 (47) & 490 (48) & 570 (49) & 370 (188) \\
\quad Triplet & 170 (14) & 160 (13) & 150 (13) & 480 (197) \\
\quad Quadruplet & 120 (60) & 110 (12) & 90 (12) & 50 (60) \\
\quad Quintuplet & 180 (30) & 140 (25) & 90 (19) & 3 (7) \\
Time (s) & 10.9 (3.3) & 14.2 (1.8) & 21.6 (6.5) & 4.8 (1.6) \\
\midrule
& \multicolumn{4}{c}{$p = 0.3,~\mu=0.20$} \\
\cmidrule(lr){2-5}
Metric & $\mathcal{M}^{\text{OS}}_{\kappa_{\max}}$ 
& $\mathcal{M}^{\text{OS}}_{0.9}$ & $\mathcal{M}^{\text{OS}}_{0.8}$ 
& $\mathcal{M}^{\text{TS}}$ \\
\midrule
$\textsf{SMD}_{C_1}$ & 0.17 (0.04) & 0.15 (0.03) & 0.12 (0.03) & 0.22 (0.07) \\
$\textsf{TV}_{C_6}$ & 0.08 (0.06) & 0.05 (0.06) & 0.06 (0.06) & 0.63 (0.48) \\
$n_c$ & 1867 (81) & 1681 (73) & 1492 (65) & 1491 (207) \\
Matched set structure & & & & \\
\quad Pair & 416 (48) & 491 (48) & 571 (49) & 341 (224) \\
\quad Triplet & 176 (15) & 164 (14) & 148 (14) & 521 (233) \\
\quad Quadruplet & 125 (11) & 108 (12) & 87 (12) & 91 (19) \\
\quad Quintuplet & 181 (29) & 134 (24) & 34 (52) & 1 (3) \\
Time (s) & 10.5 (1.8) & 14.6 (1.8) & 20.9 (2.3) & 7.0 (3.2) \\
\bottomrule
\end{tabular}
\caption{Simulation results when $p = 0.3$ and $\mu = 0.25$ (top panel) or $\mu = 0.20$ (bottom panel). $\mathcal{M}^{\text{OS}}_{\kappa_{\max}}$, $\mathcal{M}^{\text{OS}}_{0.9}$ and $\mathcal{M}^{\text{OS}}_{0.8}$ are three implementations of the proposed one-shot method. $\mathcal{M}^{\text{TS}}$ is the two-step method in \citet{pimentel2015variable}. For each measure of success, mean and standard deviation (in parenthesis) are reported across $200$ simulated datasets.}
\label{tab:OSvDC_2mu}
\end{table}


We highlight two key observations. First, three implementations of the one-shot method--- $\mathcal{M}^{\text{OS}}_{\kappa_{\max}}$, $\mathcal{M}^{\text{OS}}_{0.9}$ and $\mathcal{M}^{\text{OS}}_{0.8}$---consistently outperformed $\mathcal{M}^{\text{TS}}$ across all three key evaluation criteria. In other words, the three one-shot matches uniformly dominated the two-step, divide-and-conquer approach. For example, when $\mu = 0.25$, $\mathcal{M}^{\text{OS}}_{\kappa_{\max}}$ retained $26.4\%$ more controls ($n_c = 1870$ vs $1480$) compared to $\mathcal{M}^{\text{TS}}$ in the final matched sample, while also achieving better covariate balance on $C_1$ and $C_6$. Notably, $C_6$ was finely balanced, up to rounding error, in $\mathcal{M}^{\text{OS}}_{\kappa_{\max}}$, whereas $\mathcal{M}^{\text{TS}}$ failed to achieve fine balance in the final matched sample, despite achieving it within each entire-number-defined stratum.  Second, across the three implementations of the one-shot method, a clear trade-off emerged between the balance on $C_1$ and the size of the matched comparison group: choosing a smaller $\kappa$ improved $\textsf{SMD}_{C_1}$, but at the expense of a reduced $n_c$.

A closer examination of $\mathcal{M}^{\text{TS}}$ reveals that in roughly 40\% times, a $1$-to-$k$ match with fine balance could not be obtained within the corresponding entire-number-defined stratum. As a result, the final matched sets formed by $\mathcal{M}^{\text{TS}}$ consisted primarily of pairs and triplets, whereas the one-shot approach more often yielded quadruplets and quintuplets, contributing to a larger $n_c$.

In terms of computation time, $\mathcal{M}^{\text{TS}}$ is faster: when $p = 0.3$ and $0.25$, despite performing multiple $1$-to-$k$ matches, it took less than $5$ seconds on average, compared to $11$ to $22$ seconds (depending on the choice of $\kappa$) for the one-shot method. Two remarks are in order. First, this difference is not order-wise: both algorithms have the same computational complexity, $O(C^3)$. Second, within $\mathcal{M}^{\text{OS}}$, smaller values of $\kappa$ introduce more auxiliary nodes and edges into the proposed network structure, which in turn increases runtime. Even so, runtimes of $10$ to $20$ seconds for datasets with roughly $3000$ units appear reasonable for most practical applications.


\section{Case study: comparing two matched designs in a study of right heart catheterization}
\label{sec: case study}

We first applied the method of \citeauthor{pimentel2015variable} \citeyearpar{pimentel2015variable} to the RHC dataset. As discussed in Section~\ref{subsec: intro case study and limitation}, we examined the distribution of the insurance type variable after dividing the cohort into four strata based on their estimated entire numbers. For individuals with an entire number in $(0,2)$, we conducted a pair match and obtained $1005$ matched pairs. For those with an entire number in $[2,3)$, a $1$-to-$2$ match with fine balance was infeasible; instead, we performed another pair match, yielding 176 matched pairs. For those in $[3,4)$, we implemented a $1$-to-$3$ match and formed eight matched sets (each with one treated and three controls). Finally, for those with an entire number in $[4,\infty)$, we performed a $1$-to-$2$ match---rather than a $1$-to-$3$ or a $1$-to-$4$ match---since the latter two were both infeasible with fine balance and obtained five matched sets (each with one treated and two controls).

In total, we obtained $1181$ matched pairs, $5$ matched triplets, and $8$ matched quadruples. For this dataset, the divide-and-conquer approach primarily produced matched pairs. Panel B of Table~\ref{tab:realdata_balance} summarizes the covariate balance after matching. Three observations are noteworthy. First, the number of controls only slightly exceeded the number of treated ($1215$ vs.\ $1194$). Second, most standardized mean differences were below $0.1$, corresponding to one-tenth of a pooled standard deviation. Third, although fine balance was achieved within each stratum defined by the entire number, the overall design did not satisfy fine balance when the strata were pooled. Nevertheless, the deviation from fine balance was minimal, largely because the vast majority of matched sets were pairs.

Next, we applied our new algorithm to the same RHC dataset. We first set $\kappa = \kappa_{\max}$ to retain as many controls as possible in the matched comparison group. The final matched design included $1194$ treated patients and $1534$ control patients. Panel A of Table \ref{tab:realdata_balance} summarizes the covariate balance of this design. Relative to the method of \citet{pimentel2015variable}, our approach discarded fewer controls (270 vs.\ 589) and retained substantially more (1534 vs.\ 1215, a $26.3\%$ increase). Moreover, the new design achieved exact fine balance on the insurance type variable. Finally, the standardized mean differences for the remaining covariates were broadly comparable across the two methods. Taken together, the proposed design outperformed the other design, aligning well with the results in the simulation studies.

\begin{table}[!ht]
\centering
\scriptsize
\resizebox{\textwidth}{!}{%
\begin{tabular}{lcccccc}
\toprule
 & \multicolumn{3}{c}{Panel A: $\mathcal{M}^{\text{OS}}_{\kappa_{\max}}$} & \multicolumn{3}{c}{Panel B: $\mathcal{M}^{\text{TS}}$} \\
\cmidrule(lr){2-4} \cmidrule(lr){5-7}
Variable & Control & Treated & SMD & Control & Treated & SMD \\
\midrule
n & 1534 & 1194 &      & 1215 & 1194 &      \\
Age group (\%) & & & 0.049 & & & 0.022 \\
\quad 18-29 & 124 ( 8.1) &  89 ( 7.5) & &  96 ( 7.9) &  89 ( 7.5) & \\
\quad 30-50 & 607 (39.6) & 451 (37.8) & & 465 (38.3) & 451 (37.8) & \\
\quad 51-65 & 803 (52.3) & 654 (54.8) & & 654 (53.8) & 654 (54.8) & \\
Sex = Male (\%) & 877 (57.2) & 693 (58.0) & 0.018 & 694 (57.1) & 693 (58.0) & 0.019 \\
Years of education & 12.17 (2.77) & 12.30 (2.92) & 0.047 & 12.16 (2.66) & 12.30 (2.92) & 0.049 \\
Race (\%) & & & 0.065 & & & 0.054 \\
\quad Black & 267 (17.4) & 219 (18.3) & & 206 (17.0) & 219 (18.3) & \\
\quad Other & 120 ( 7.8) & 112 ( 9.4) & & 102 ( 8.4) & 112 ( 9.4) & \\
\quad White & 1147 (74.8) & 863 (72.3) & & 907 (74.7) & 863 (72.3) & \\
Income (\%) & & & 0.087 & & & 0.060 \\
\quad \$11--\$25k & 269 (17.5) & 236 (19.8) & & 222 (18.3) & 236 (19.8) & \\
\quad \$25--\$50k & 343 (22.4) & 283 (23.7) & & 281 (23.1) & 283 (23.7) & \\
\quad $>$ \$50k   & 162 (10.6) & 134 (11.2) & & 127 (10.5) & 134 (11.2) & \\
\quad Under \$11k & 760 (49.5) & 541 (45.3) & & 585 (48.1) & 541 (45.3) & \\
DASI  & 21.17 (5.76) & 21.13 (5.32) & 0.007 & 21.19 (5.73) & 21.13 (5.32) & 0.010 \\
Cancer (\%) & & & 0.065 & & & 0.012 \\
\quad Metastatic & 107 ( 7.0) &  74 ( 6.2) & &  72 ( 5.9) &  74 ( 6.2) & \\
\quad No         & 1197 (78.0) & 963 (80.7) & & 981 (80.7) & 963 (80.7) & \\
\quad Yes        & 230 (15.0) & 157 (13.1) & & 162 (13.3) & 157 (13.1) & \\
Respiratory rate  & 28.90 (13.91) & 27.76 (14.43) & 0.080 & 28.71 (13.76) & 27.76 (14.43) & 0.068 \\	
PaCo2 & 37.88 (12.14) & 36.74 (11.10) & 0.098 & 37.92 (12.40) & 36.74 (11.10) & 0.101 \\
Temperature & 37.86 (1.72) & 37.78 (1.83) & 0.047 & 37.93 (1.65) & 37.78 (1.83) & 0.048 \\
Urine output & 2283 (1001) & 2315 (1238) & 0.029 & 2273 (993) & 2315 (1238) & 0.038 \\
Urine output miss = 1 (\%) & 843 (55.0) & 598 (50.1) & 0.098 & 645 (53.1) & 598 (50.1) & 0.060 \\
White Blood Cell count & 14.87 (10.58) & 15.77 (12.11) & 0.079 & 14.74 ( 9.56) & 15.77 (12.11) & 0.095 \\
Sodium & 136.41 (7.10) & 136.26 (7.75) & 0.020 & 136.45 (6.69) & 136.26 (7.75) & 0.020 \\
Potassium & 3.99 (1.00) & 4.01 (1.06) & 0.024 & 3.97 (0.96) & 4.01 (1.06) & 0.039 \\
Renal history = 1 (\%) & 76 ( 5.0) & 72 ( 6.0) & 0.047 & 70 ( 5.8) & 72 ( 6.0) & 0.011 \\
Liver history = 1 (\%) & 166 (10.8) & 117 ( 9.8) & 0.034 & 111 ( 9.1) & 117 ( 9.8) & 0.023 \\
Medical Insurance (\%) & & & 0.003 & & & 0.027 \\
\quad Medicaid & 234 (15.3) & 182 (15.2) & & 197 (16.2) & 182 (15.2) & \\
\quad Medicare & 137 ( 8.9) & 107 ( 9.0) & & 107 ( 8.8) & 107 ( 9.0) & \\
\quad Medicare \& Medicaid &  70 ( 4.6) &  55 ( 4.6) & &  55 ( 4.5) &  55 ( 4.6) & \\
\quad No insurance & 145 ( 9.5) & 113 ( 9.5) & & 114 ( 9.4) & 113 ( 9.5) & \\
\quad Private & 869 (56.6) & 675 (56.5) & & 680 (56.0) & 675 (56.5) & \\
\quad Private \& Medicare &  79 ( 5.1) &  62 ( 5.2) & &  62 ( 5.1) &  62 ( 5.2) & \\
\bottomrule
\end{tabular}}
\caption{\textbf{Panel A}: Covariate balance of the variable ratio design using the new method with $\kappa = \kappa_{\max}$. \textbf{Panel B}: Covariate balance using the ``divide-and-conquer" approach in \citet{pimentel2015variable}. SMD: standardized mean difference. Mean and standard deviation (SD) are reported for each continuous variable. Count and percentage (\%) are reported for each categorical variable.}
\label{tab:realdata_balance}
\end{table}

To further assess how the choice of $\kappa$ would influence the resulting matched comparison group, we also applied the proposed method with $\kappa = 0.9\kappa_{\max}$ and $\kappa = 0.8\kappa_{\max}$ (see Supplemental Material B). In each implementation, fine balance on insurance type was achieved. Relative to using $\kappa = \kappa_{\max}$, smaller values of $\kappa$ yielded improved covariate balance and a reduced comparison group size, although the matched comparison group size remained larger than that obtained with \citeauthor{pimentel2015variable}'s \citeyearpar{pimentel2015variable} method.

\section{Summary and discussion}
\label{sec: discussion}
Matching with a variable number of controls is a useful alternative to more conventional $1$-to-$k$ match. It has potential to retain as many control units as possible, while maintaining sufficient covariate balance. In this article, we propose a method that conducts a variable-ratio match with fine balance on one or more important nominal variables in a single shot. The proposed method is shown, via simulation studies and a case study, to outperform the traditional two-step approach in multiple criteria simultaneously, including overall balance, quality of fine balance, and the size of the matched comparison group.  

\citet{chen2023testing} proposed assessing deviations from the randomization assumption by examining covariate balance. Failure to reject the randomization assumption can be taken as evidence that the matched observational study sufficiently emulates a target randomized controlled trial. If the assumption is rejected, researchers may attempt to improve the design---for instance, by selecting a smaller $\kappa$ in our proposed variable-ratio matching method. When no further design improvements are feasible and residual bias persists, outcome analyses that explicitly account for this bias should be used; see, for example, \citet{rosenbaum2002observational}, \citet{fogarty2018regression,fogarty2020studentized} and \citet{chen2023testing} for inferential approaches to matched cohort data under biased randomization schemes.

While the network structure in Figure \ref{fig: bipartite plot} emphasizes key features of variable-ratio matching---such as fine balance and restricting matched set sizes to between $L+1$ and $U+1$---the cost function $\delta(\tau_t, \gamma_c)$ can be modified to accommodate additional design goals. For example, if exact matching on an important prognostic variable $X$ is desired for subsequent subgroup analysis, one may set $\delta(\tau_t, \gamma_c) = \infty$ whenever $\tau_t$ and $\gamma_c$ do not share the same value of $X$. As another example, directional penalties may be applied to $\delta(\tau_t, \gamma_c)$ to help further reduce bias on some recalcitrant covariate \citep{yu2019directional}.

The proposed method can be further accelerated by removing edges from the network. When the sample size is large, rather than connecting each $\tau_t$ to every $\gamma_c$, the network may be sparsified by linking each $\tau_t$ only to a fixed number of candidate controls within the propensity score caliper \citep{yu2020matching}. Since treated and control units with vastly different estimated propensity scores are unlikely to be placed in the same matched set, edges between them can be safely omitted without hurting the quality of the match.

\section{Disclosure statement}\label{disclosure-statement}
The authors have no conflicts of interest.

\section{Data Availability Statement}\label{data-availability-statement}

Deidentified data were publicly available via the following URL: \url{https://hbiostat.org/data/repo/rhc}. The dataset analyzed in the article was obtained from the supplementary material of \cite{rosenbaum2012optimal} and is also available in the supplementary material of the current article.

\phantomsection\label{supplementary-material}
\bigskip

\begin{center}

{\large\bf SUPPLEMENTARY MATERIAL}

\end{center}

\begin{description}
\item[Supplemental Material:]
This supplemental material contains additional simulation results, case study results, and a tutorial on how to install the R package and reproduce results in the paper. (pdf file)
\item[R-package match2C:]
R package match2C is available via Github: \url{https://github.com/bzhangupenn/match2C}.
\item[RHC dataset:]
Dataset used in the case study comes with the R package. To access the dataset, users can use the command \textbf{data(dt\_rhc)} or \textbf{data(dt\_rhc\_ac)}.
\end{description}

\bibliography{bibliography}

\newpage

\end{document}